\newcommand{\citet}[1]{\citeauthor{#1}~\shortcite{#1}}
\newcommand{\citep}[1]{\cite{#1}}
\newtheorem{theorem}{Theorem}
\newtheorem{lemma}[theorem]{Lemma}
\newtheorem{observation}[theorem]{Observation}
\newtheorem{proposition}[theorem]{Proposition}
\newlength{\RoundedBoxWidth}
\crefname{theorem}{Theorem}{Theorems}
\crefname{observation}{Observation}{Observations}
\crefname{proposition}{Proposition}{Propositions}
\crefname{corollary}{Corollary}{Corollaries}
\crefname{lemma}{Lemma}{Lemmas}
\crefname{claim}{Claim}{Claims}
\crefname{definition}{Definition}{Definitions}
\crefname{section}{Section}{Sections}
\crefname{figure}{Figure}{Figures}
\Crefname{theorem}{Thm}{Thms}
\Crefname{observation}{Obs}{Obs}
\Crefname{proposition}{Prop}{Props}
\Crefname{corollary}{Cor}{Cors}
\newsavebox{\GrayRoundedBox}
\newenvironment{GrayBox}[1]%
   {\setlength{\RoundedBoxWidth}{.92\linewidth}
    \def\boxheading{#1}
    \begin{lrbox}{\GrayRoundedBox}
       \begin{minipage}{\RoundedBoxWidth}}%
   {   \end{minipage}
    \end{lrbox}
    \begin{center}
    \begin{tikzpicture}%
       \node(Text)[draw=black!20,fill=white,rounded corners,%
             inner sep=2ex,text width=\RoundedBoxWidth]%
             {\usebox{\GrayRoundedBox}};
        \coordinate(x) at (current bounding box.north west);
        \node [draw=white,rectangle,inner sep=3pt,anchor=north west,fill=white] 
        at ($(x)+(6pt,.75em)$) {\boxheading};
    \end{tikzpicture}
    \end{center}}     
\newcommand{\defproblema}[3]{
	\begin{GrayBox}{\textsc{#1}}
		\begin{asparadesc}
			\item[Input:] #2
			\item[Question:] #3
		\end{asparadesc}
	\end{GrayBox}
}
\newcommand{\Wone}{\ensuremath{\mathrm{W}[1]}}
\newcommand{\NP}{\ensuremath{\mathrm{NP}}}
\newcommand{\fair}{\textsc{Envy Elimination by Adding Goods}}
\newcommand{\ufair}{\textsc{Envy Elimination by Adding Goods}}
\newcommand{\N}{\ensuremath{\mathds{N}}}
\DeclareMathOperator{\supply}{\#}
\newcommand{\finR}{\ensuremath{R_{<\infty}}}
\newcommand{\pushright}[1]{\ifmeasuring@#1\else\omit\hfill$\displaystyle#1$\fi\ignorespaces}
\newcommand{\yes}{\textnormal{\texttt{yes}}}
\newcommand{\no}{\textnormal{\texttt{no}}}
\title{How to Resolve Envy by Adding Goods}
\author{
	Matthias Bentert$^1$
	\and
	Robert Bredereck$^2$
	\and
	Eva Deltl$^2$
	\and
	Pallavi Jain$^3$
	\and
	Leon Kellerhals$^2$\\
	\affiliations
	$^1$University of Bergen, Norway \\
	$^2$TU Clausthal, Germany \\
	$^3$Indian Institute of Technology Jodhpur, India\\
	\emails
	matthias.bentert@uib.no,
	\{robert.bredereck,eva.deltl,leon.kellerhals\}@tu-clausthal.de,
	pallavijain.t.cms@gmail.com}
\date{}
\begin{document}

\maketitle
\begin{abstract}
We consider the problem of resolving the envy of a given initial allocation by adding elements from a pool of goods.
We give a characterization of the instances where envy can be resolved by adding an arbitrary number of copies of the items in the pool.
From this characterization, we derive a polynomial-time algorithm returning a respective solution if it exists.
If the number of copies or the total number of added items are bounded, the problem becomes computationally intractable
even in various restricted cases.
We perform a parameterized complexity analysis, focusing on the number of agents and the pool size as parameters.
Notably, although not every instance admits an envy-free solution, our approach allows us to efficiently determine, in polynomial time, whether a solution exists—an aspect that is both theoretically interesting and far from trivial.

\end{abstract}


\section{Introduction}
\label{sec:intro}


Fair allocation of indivisible goods to agents is a fundamental problem in resource allocation~\cite{DBLP:journals/ai/AmanatidisABFLMVW23,DBLP:journals/jair/LiuLSW24,DBLP:journals/corr/abs-2307-10985,DBLP:conf/ijcai/0004R23}
and has a plethora of applications such as dorm room allocations, donations by a food bank, inheritance matters, divorce disputes, or job allocations.
While most work assumes that the allocation starts with a blank sheet, i.e., all items are initially unassigned and the agents do not envy each other,
there are many scenarios where some items are already allocated.
For example, in inheritance matters, the testament may likely allocate some of the heritable goods in an unfair way.

Recent work provides a toolbox of approaches to increase the fairness of such unfair allocations,
such as deleting or donating some of the allocated goods~\cite{DBLP:journals/algorithmica/DornHS21,DBLP:journals/siamcomp/ChaudhuryKMS21,boehmer2024multivariate},
sharing goods~\cite{DBLP:journals/ior/SandomirskiyS22,DBLP:journals/jair/BredereckKLNS23}, reallocating them~\cite{AZIZ20191,10.5555/3171642.3171674}, or providing subsidies~\cite{DBLP:conf/sagt/HalpernS19,DBLP:conf/sigecom/BrustleDNSV20,DBLP:conf/ijcai/BarmanKNS22}.
Naturally, these approaches to increase fairness cannot be a good match for every real-world scenario.
Donating or sharing goods is not always possible:
The goods could have lost their value already (think of food donations by a community center),
the current allocation may need to abide some constraints such as a testament,
or the agents may become attached to their assigned goods.
Clearly, subsidies are also not amenable in every scenario, as they may fail to address urgent non-monetary needs such as shelter, food, or medical care.
Sometimes, certain approaches, such as reallocation, sharing, or subsidies,
are impossible for legal reasons. 

We extend the toolbox with an approach for scenarios where the initial allocation is fixed
and the goal is to alleviate unfairness by allocating items from a pool of additional goods.
The feature of adding items that sets it apart from the above approaches is that, theoretically, there is no natural limit on the budget:
While one cannot delete/reallocate/share more than all items of the initial allocation, there is no bound on the number of items one can add.
Of course, if one could add from the infinite pool of all possible resources, the problem would become trivial.
Thus, we assume to be given a finite set of resources with possibly unlimited supply.



While true infinity of course is unattainable in reality, scenarios where resources are plentiful
or practically sufficient provide a fertile ground for applications of our theoretical results.
Consider, for instance, a community center distributing leftover charity items.
In such a setting, the center might possess a large stock of various goods that can be reallocated
with minimal constraint on quantities.
This abundance allows for flexible distributions that can address disparities and highlights a scenario where adding items is feasible, whereas redistributing money is not.
Similarly, in some settings, offering cash may be ineffective or even harmful. Recipients might purchase unsuitable items, and 
non-monetary needs 
may not be immediately met. Here, directly adding appropriate goods or services is more impactful than offering cash.
Another instance of practical abundance 
involves the use of vouchers or gift cards, which can be thought of as having an infinite supply from the perspective of the allocating authority.
Here, the challenge of achieving fairness can be substantially alleviated by issuing as many 
as needed to satisfy fairness, thereby achieving a more balanced distribution of intangible resources.
Additionally, in the context of office liquidations, there may exist vast amounts of leftover resources such as
laptops, chairs, and other equipments. 

We study the computational complexity of alleviating unfairness by adding items from a pool of goods to an initial allocation.
Herein, we focus solely on envy-freeness and additive, non-negative valuations.


\paragraph{Related work.}
When envy-freeness cannot be achieved, one approach is to compensate agents with a divisible resource (e.g., money) to eliminate envy \cite{haake2002bidding}.
The study of fair division with subsidies or money transfers \cite{DBLP:conf/sagt/HalpernS19,DBLP:conf/aaai/000121} has focused on finding allocations for which minimal subsidies will result in envy-freeness.
Instead of adding one divisible resource, our model considers achieving envy-freeness by adding a given set of additional indivisible items.

The idea of controlling fair division scenarios by adding or deleting items, in order to arrive at an instance which can be allocated fairly, has been explored in various contexts.
\citet{aziz2016control} consider the problem of adding/deleting/replacing few items such that there is an envy-free complete allocation with ordinal valuations.
This is similar to our setting when our initial allocation is empty; their NP-hardness results carry over to our problem with ordinal valuations (but not with cardinal, additive valuations).

While we focus on achieving envy-freeness through the addition of resources, previous studies have primarily explored fairness by considering item deletions.
For example, \citet{DBLP:journals/algorithmica/DornHS21} analyzed the complexity of achieving proportional allocations by removing items in settings with ordinal preferences.
Similarly, \citet{boehmer2024multivariate} investigated the complexity of ensuring envy-free (and EF1) allocations through item donations, when agents have additive utility-based valuations. Unlike these approaches, our work does not allow modifications to the initial allocation.

Closest to ours is the recent work by \citet{hv2024fairefficientcompletionindivisible}
who looked at scenarios where specific resources have been pre-assigned to particular agents
and the goal is to complete the allocation by allocating all remaining resources such that the overall allocation is fair and efficient.
The key difference to our work is that our pool of additional goods does not need to be fully allocated and that our focus lies on variants where the supply is unbounded.

\paragraph{Our contributions.}
We study the computational complexity of \fair{} (defined in \cref{sec:prelims}),
where the task is to extend an initial unalterable allocation by adding items from a pool of additional goods to the agents such that the extended allocation becomes envy-free.
We focus on the setting where the agents have additive, non-negative valuations.

A major difference to settings where one can delete or reallocate items from the initial allocation is that there is no natural bound on the number of additionally allocated items.
For example, the problem of deleting an unlimited amount of items from an initial allocation is trivial: empty allocations are envy-free.
As it turns out, this is not the case when we add an unlimited amount of items which all have infinite supply:
Consider the following example with two agents with identical valuations.
The initial allocation gives an item of value one to the first agent, and we may add any number of copies of an item with value two.

While such observations often pave the way for an NP-hardness proof, our main technical contribution shows that whether one can resolve envy by adding any number of copies of a given set of items can be decided in polynomial time (\cref{sec:poly}).
The main challenge herein are agents whose valuations over the additional items are identical (or proportional).
As a special case, the envy between two agents can only be resolved by an extension if there is an integer divisible by the greatest common divisor within a certain ranges quantified by the agents' envy gaps.
To resolve the envy between multiple agents, we verify whether this property can be simultaneously fulfilled by an integer linear program (ILP) derived from the above ranges.
Polynomial-time solvability of the ILP follows from the constraint matrix being totally unimodular.


We next studied how bounding the supply of some items (\cref{sec:bounded-supply}) or size (also called budget) of the extension (\cref{sec:bounded-budget}) affects the computational complexity of our problem.
Motivated by computational intractability even when there are only three additional items with finite supply,
we initiate a parameterized complexity analysis of our problem, showing two other parameterized hardness results,
but also fixed-parameter tractability by the sum of item supplies if the budget is bounded and by the parameter combination number of agents plus number of resources.
We refer to \cref{fig:overview} for an overview of our results.
\tikzset{
	textpara/.style = {
		align = center,
		font = \footnotesize,
	},
	para/.style = {
		textpara,
		draw,
		rectangle,
		rounded corners = 1mm,
	},
	paranp/.style = {
		para,
		fill=red!30,
	},
	paraxp/.style = {
		para,
		fill=orange!30,
	},
	paranopk/.style = {
		para,
		fill=yellow!50,
	},
	parapk/.style = {
		para,
		fill=green!30,
	},
	parawhat/.style = {
		para,
		fill=black!20,
	},
	parapoly/.style = {
		para,
		fill=blue!30!white,
	},
}
\begin{figure}
	\centering
	\begin{tikzpicture}
		\node[parapoly, minimum width=10em] (poly) at (0, 2.9) {\textbf{Polynomial-time solvable} (\Cref{thm:poly})};
		\draw[semithick,dashed,black!70,rounded corners] (-4.25,3.3) rectangle (4.25, 2.15);
		\node at (0,2.4) [font=\small] {\emph{all supply unbounded}};

		\node[paranopk] (k) at (-3, 0) {$k$\\ (Thms \ref{thm:hardness},\,\ref{thm:fpt-r-bounded})\\\textbf{XP, \Wone-hard}};
		\node[paranp]   (r) at (0, -.7) {$|R|$\\ (\Cref{prop:np-h-r})\\\textbf{para-NP-hard}};
		\node[paraxp]   (n) at (3.25, 0) {$|A|$\\ (\Cref{prop:ufair-wone-a})\\\textbf{\Wone-hard}};

	\node[parapk]   (p) at (-2, 1.4) {$p$\\ (\Cref{thm:fpt-r-bounded})\\\textbf{FPT}};
		\node[parapk]  (nr) at (2, 1.4) {$|A| + |R|$\\ (\Cref{prop:ILP})\\\textbf{FPT}};

		\draw[semithick] (k) -- (p) -- (r) -- (nr) -- (n);

		\draw[semithick,dashed,black!70,rounded corners] (-4.25,2) rectangle (-1.25, -1.4);
		\node at (-2.75,-1.15) [font=\small] {\emph{bounded budget}};
	\end{tikzpicture}
	\caption{
		Overview over our results for \fair{} with different parameterizations.
        Results in dashed boxes hold only for instances with the respective restriction.
		Here, $A$ denotes the set of agents, $R$ the set of additional items, $k$ the upper bound on the number of items, and $p$ the sum of the supplies.
		An edge between two parameter boxes implies that the upper parameter upper-bounds the lower parameter.
	}\label{fig:overview}
\end{figure}
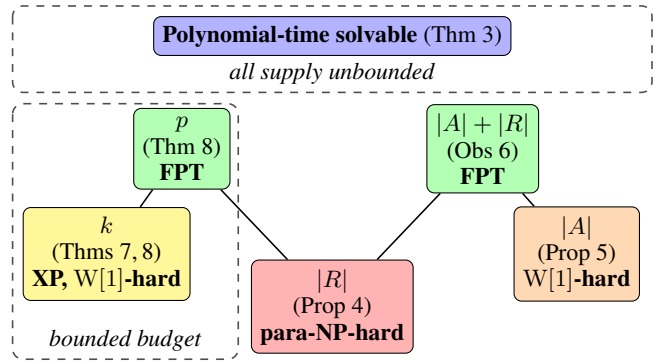

\section{Preliminaries}
\label{sec:prelims}

For~$n \in \N$, we denote by~$[n]$ the set~$\{1,2,\ldots,n\}$.

Let $A = \{a_1, \dots, a_n\}$ be the set of \emph{agents}.
We have two sets of items:
The (multi-)set $P$ of \emph{initial items} and the set $R$ of \emph{additional items}.
Moreover, each item $r \in R$ has a supply $\supply(r) \in \N \cup \{\infty\}$ and each agent $a \in A$ has a \emph{valuation} $v_a \colon P \cup R \to \N$.
Throughout this work, we assume valuation functions to be additive and define $v_a(S) \coloneq \sum_{r \in S} v_a(r)$ for all $S \subseteq P \cup R$.
We call a valuation $v_a$ \emph{binary} if $v_a(r) \in \{0,1\}$, and say that $a$ approves item $r$ if $v_a(r) = 1$.
Two valuations $v_a$ and $v_{a'}$ are \emph{identical} if $v_a = v_{a'}$
and \emph{proportional} if there is an $\alpha \in \mathds{Q}$ such that $v_a(r) = \alpha v_{a'}(r)$ for all $r \in R$.
We then write $v_a = \alpha v_{a'}$ for short.

In our work, we are given an \emph{initial allocation} $\sigma \colon A \to 2^P$,
which allocates sets of initial items to agents.
The set $\sigma(a)$ is called the \emph{initial bundle of $a$} and is disjoint from the initial bundle of any other agent.

We look for an \emph{extension} $\rho \colon A \times R \to \N$ which specifies the number of copies of item $r$ to allocate to agent~$a$.
We assert $\sum_{a \in A} \rho(a,r) \le \supply(r)$ for each $r \in R$.
The \emph{size} or \emph{budget} of $\rho$ is the number $\sum_{a \in A} \sum_{r \in R} \rho(a, r)$ of allocated items.
For agents $a, a' \in A$, we define $v_a(\rho,a') \coloneqq \sum_{r \in R} \rho(a',r) \cdot v_a(r)$.

An \emph{extended allocation} $(\sigma, \rho)$ consists of an initial allocation $\sigma$ and an extension $\rho$.
For two agents $a, a' \in A$ and an extended allocation $(\sigma,\rho)$, we say that $a$ \emph{envies} $a'$ if
\[v_{a}(\sigma(a)) + v_a(\rho,a) < v_{a}(\sigma(a'))+v_{a}(\rho,a').\]
We say that~$a$ initially envies~$a'$ if~$v_a(\sigma(a)) < v_a(\sigma(a'))$.
We say that $a$ is (initially) \emph{envious} if it (initially) envies some other agent.
The \emph{envy graph} $\mathcal{G}_{\sigma,\rho}$ of $(\sigma, \rho)$
has a vertex for each agent and a directed edge $(a, a')$ (also called \emph{envy edge}) if $a$ envies $a'$.
For any pair $a, a'$ of agents, the \emph{envy gap} of $a$ and~$a'$ with respect to~$(\sigma,\rho)$ is $\gamma_\rho(a,a') = v_{a}(\sigma(a'))+ v_a(\rho, a') - (v_{a}(\sigma(a)) + v_a(\rho, a))$.
We call an extended allocation $(\sigma, \rho)$ \emph{envy-free} if $\mathcal G_{\sigma, \rho}$ is edgeless.
We then also say that $\rho$ \emph{resolves envy}.
The initial envy gap of~$a$ and~$a'$ is~$\gamma(a,a') = v_a(\sigma(a'))-v_a(\sigma(a))$.
We mention that the (initial) envy gap can be both positive and negative (as well as 0).

In this work, we consider the computational problem of finding an envy-resolving extension,
defined as follows.

\defproblema{\fair}{
	Sets $P, R$ of items, a supply $\supply(r) \in \N \cup \{\infty\}$ for each $r \in R$,
	a set $A$ of agents, each with a valuation function $v_a \colon P \cup R \to \N$,
 an initial allocation $\sigma$, and an integer $k \in \N$.
}{
	Is there a size-$k$ envy-resolving extension $\rho$?
}

We point out that the main part of this work is on the case where the budget $k$ is unrestricted, i.e.,
there is no restriction on the number of items the extension can allocate.

We use standard notation from parameterized algorithmics \cite{bluebook}.
A problem parameterized with some integer $k$ is called
\begin{inparaenum}[(1)]
\item fixed-parameter tractable (FPT), if it can be solved in $f(k) \cdot |I|^{\mathcal O(1)}$ time, and
\item in XP, if it can be solved in $|I|^{f(k)}$ time,
\end{inparaenum}
where $f$ is a computable function only depending on $k$ and $|I|$ is the instance size.
One can show that a parameterized problem is likely not FPT by proving it to be \Wone-hard (using parameterized reductions, which generalize standard polynomial-time many-one reductions).
Moreover, if a parameterized problem is NP-hard even for constant parameter values, we call it para-NP-hard.
In our running time analysis, we assume that basic arithmetic operations over any numbers in the input (or of similar size) can be performed in constant time.

\section{When Supply and Budget are Unbounded}
\label{sec:poly}

We first look at the case where each item $r \in R$ has infinite supply.
We show that this case is polynomial-time solvable.
We first show how to decide whether the envy between two agents can be resolved.
Afterwards, we generalize the result to more than two agents.

We distinguish between agents with non-proportional and proportional valuations.
In the first case, envy can always be overcome as shown next.
Observe that valuations can only be non-proportional if there are at least two items in $R$.

\begin{lemma}\label{lem:diff}
	Let $a, a'$ be two agents with non-proportional valuations.
	If each item in $R$ has infinite supply,
	then there exists an extension $\rho$ that resolves envy between $a$ and $a'$.
	Moreover, $\rho$ can be computed in $\mathcal{O}(m)$ time, where $m = |R|$.
\end{lemma}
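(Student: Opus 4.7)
The plan is to reduce envy-resolution between $a$ and $a'$ to two linear inequalities in an integer net-allocation vector and exploit that non-proportionality of $v_a$ and $v_{a'}$ (equivalently: linear independence as vectors in $\mathbb{Q}^R$, as the observation preceding the lemma tacitly assumes) lets us satisfy both simultaneously. To set this up, I would put $d_r := \rho(a,r)-\rho(a',r) \in \Z$ for each $r \in R$; since every item has infinite supply, any $d \in \Z^R$ is realizable by $\rho(a,r) := \max(d_r,0)$ and $\rho(a',r) := \max(-d_r,0)$. Expanding the envy definition, $\rho$ resolves envy between $a$ and $a'$ iff
\[
\sum_{r\in R} v_a(r)\,d_r \;\ge\; \gamma(a,a') \quad \text{and} \quad \sum_{r\in R} v_{a'}(r)\,d_r \;\le\; -\gamma(a',a),
\]
so it suffices to produce such an integer vector $d$, ideally with support of size at most two.

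Next, I would locate two items $r_1, r_2 \in R$ whose $2\times 2$ minor
\[
D \;:=\; v_a(r_1)\,v_{a'}(r_2) - v_a(r_2)\,v_{a'}(r_1)
\]
is nonzero. Such a pair must exist: non-proportionality precludes either $v_a$ or $v_{a'}$ being identically zero on $R$, so I can fix any $r_1$ with $v_a(r_1) + v_{a'}(r_1) > 0$; if every other $r_2$ produced a zero minor with this $r_1$, then $v_a$ and $v_{a'}$ would be proportional throughout $R$, contradicting the hypothesis. One pass through $R$ fixes $r_1$, a second identifies $r_2$, and this is where the $\mathcal{O}(m)$ running time is concentrated.

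Then I would construct $d$ as a combination of two ``pure'' correction vectors supported on $\{r_1, r_2\}$: set $e_1(r_1) := v_{a'}(r_2)$, $e_1(r_2) := -v_{a'}(r_1)$, $e_2(r_1) := -v_a(r_2)$, $e_2(r_2) := v_a(r_1)$, and zero elsewhere. A direct calculation yields $\sum_r v_a(r)\,e_1(r) = D$, $\sum_r v_{a'}(r)\,e_1(r) = 0$, $\sum_r v_a(r)\,e_2(r) = 0$, and $\sum_r v_{a'}(r)\,e_2(r) = D$. Hence for $d := c_1 e_1 + c_2 e_2$ the two target sums become $c_1 D$ and $c_2 D$; since $D$ is a nonzero integer, I can pick integers $c_1, c_2$ (with signs dictated by that of $D$) so that $c_1 D \ge \gamma(a,a')$ and $c_2 D \le -\gamma(a',a)$, for instance via appropriate ceilings or floors of $\gamma(a,a')/D$ and $-\gamma(a',a)/D$.

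The one place I expect to take a little care is verifying the existence of the minor pair, which requires ruling out the degenerate case where one valuation is identically zero while the other is not---an edge case that depends on reading ``non-proportional'' as ``linearly independent''. Once that is settled, the rest is elementary $2\times 2$ linear algebra and two scans over $R$, yielding the $\mathcal{O}(m)$ bound.
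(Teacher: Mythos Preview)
Your proposal is correct and follows essentially the same approach as the paper: both locate a pair $r_1,r_2$ with nonzero $2\times 2$ minor (the paper phrases this as ``differing ratios'' $x/y > c/d$, i.e.\ $xd-cy\ge 1$) and then build the extension from the two allocation patterns that each shift one agent's envy while leaving the other agent's view unchanged---your vectors $e_1,e_2$ are exactly the paper's two unit steps, and your coefficients $c_1,c_2$ correspond to its ``apply the step $\gamma(a,a')$ times.'' Your linear-algebraic packaging via the net vector $d$ and the simultaneous choice of $c_1,c_2$ is slightly cleaner than the paper's case split on who initially envies whom, and you rightly flag that the argument needs ``non-proportional'' to be read symmetrically (i.e.\ linear independence) so that neither valuation is identically zero on~$R$.
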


\begin{proof}
	Non-proportionality implies the existence of \( {r_1, r_2 \in R} \) with differing ratios of valuations between agents.
	For brevity, let
\( v_{a}(r_1) = x \), \( v_{a}(r_2) = c \), \( v_{a'}(r_1) = y \), and~\( {v_{a'}(r_2) = d} \).
Furthermore, assume without loss of generality that \( r_1 \) is more valuable relative to \( r_2 \) for agent \( a \) compared to agent \( a' \), i.e., 
\[
\frac{x}{y} > \frac{c}{d},
\]
which is equivalent to $cy < xd$.
As all of these values are integers, we have $xd - cy \ge 1$.
We now show how we can leverage the differing valuation ratios to eliminate envy by careful distribution of these items.

If initially $a$ envies $a'$,
consider the extension $\rho$ with
$\rho(a, r_1) = d$ and 
$\rho(a', r_2) = y$.
Then
\begin{align*}
    v_{a}(\rho,a) &= d \cdot x > y \cdot c = v_{a}(\rho,a') \quad\text{and}\\ v_{a'}(\rho,a) &= d\cdot y = y \cdot d = v_{a'}(\rho,a'),
\end{align*}
that is, the gap $\gamma(a, a')$ decreases by~$xd - cy \geq 1$ while the valuation of $a'$ regarding the bundles of~$a$ and~$a'$ does not change as both values increase by~$dy$.
Thus, allocating $\gamma(a,a')$ copies of $\rho$ resolves the envy of $a$ to $a'$ without creating new envy from $a'$ to $a$.
If initially~$a'$ envied~$a$, then we can do the same with the roles of~$a$ and~$a'$ reversed by
allocating $c \cdot \gamma(a', a)$ copies of $r_1$ to $a$
and $x \cdot \gamma(a', a)$ copies of $r_2$ to $a'$.


Note that finding $r_1,r_2 \in R$ with different valuation ratios takes $\mathcal O(m)$ steps as we can start with any object and then all remaining objects either have the same valuation ratio or at least one differs.
Hence, we can compute an extension that resolves envy between~$a$ and~$a'$ in $\mathcal{O}(m)$ time.
\end{proof}

We next turn to the case with proportional valuations.
We first prove a necessary and sufficient condition for resolving envy between two agents $a$ and $a'$.
We mention that the valuation of initial items does not need to be proportional.

\begin{lemma}\label{lem:gcd}
	Let $a, a'$ be two agents with $v_a(r) = \alpha v_{a'}(r)$ for all~$r \in R$ and some~$\alpha$.
    Suppose that $a$ envies $a'$ and let $\gamma(a,a')$ and $\gamma(a',a)$ be the envy gaps of $a$ and~$a'$ and of~$a'$ and~$a$, respectively.
    If each item in $R =  \{r_1,\ldots,r_m\}$ has infinite supply,
    then there exists an extension $\rho$ that resolves the envy between $a$ and $a'$
    if and only if
    there exists an integer $\gamma(a,a') \le T \le - \alpha \gamma(a',a)$ that is divisible by
    $\gcd(v_a(r_1), \dots, v_a(r_m))$.
    Moreover, $\rho$ can be computed in $\mathcal{O}(m \log W)$ time, where $W$ is the largest item valuation.
\end{lemma}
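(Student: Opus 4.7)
The plan is to single out $T := v_a(\rho,a) - v_a(\rho,a')$ as the key quantity and to show that the two non-envy constraints translate precisely into the two-sided bound on $T$, while the divisibility by $g := \gcd(v_a(r_1),\ldots,v_a(r_m))$ falls out from the fact that $T$ is a $\Z$-linear combination of the $v_a(r_i)$.

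For the only-if direction, I would take an arbitrary envy-resolving extension $\rho$. Non-envy of $a$ towards $a'$ rewrites directly as $T \ge \gamma(a,a')$. For the upper bound, non-envy of $a'$ towards $a$ gives $v_{a'}(\rho,a') - v_{a'}(\rho,a) \ge \gamma(a',a)$; multiplying by $\alpha \ge 0$ (non-negativity follows since both valuations take non-negative values) yields $-T \ge \alpha\gamma(a',a)$, i.e.\ $T \le -\alpha\gamma(a',a)$. Finally, $T = \sum_{i=1}^{m} (\rho(a,r_i) - \rho(a',r_i))\, v_a(r_i)$ is an integer combination of $v_a(r_1),\ldots,v_a(r_m)$ and hence a multiple of $g$.

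For the if direction, given a $T$ satisfying the two-sided bound and divisible by $g$, I would invoke the extended Euclidean algorithm iteratively on the $m$ values $v_a(r_1),\ldots,v_a(r_m)$ to obtain integers $x_1,\ldots,x_m$ with $\sum_i x_i v_a(r_i) = T$. Exploiting infinite supply, I split each $x_i$ into its positive and negative parts by setting $\rho(a,r_i) := \max(0,x_i)$, $\rho(a',r_i) := \max(0,-x_i)$, and $\rho(b,r_i) := 0$ for every other agent $b$. Then $v_a(\rho,a) - v_a(\rho,a') = T \ge \gamma(a,a')$ kills $a$'s envy of $a'$, and (when $\alpha > 0$) dividing by $\alpha$ yields $v_{a'}(\rho,a') - v_{a'}(\rho,a) = -T/\alpha \ge \gamma(a',a)$, using the upper bound on $T$, so $a'$ does not envy $a$ either.

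The step I would flag as the main obstacle is the degenerate case $\alpha = 0$: then $v_a$ vanishes on $R$, so $T$ is forced to be $0$ and the interval $[\gamma(a,a'), -\alpha\gamma(a',a)] = [\gamma(a,a'), 0]$ is empty since $\gamma(a,a') > 0$, matching the fact that $a$'s view cannot be altered by any $\rho$. For the running time, computing $g$ together with Bézout coefficients for $m$ values bounded by $W$ takes $\mathcal O(m \log W)$ time via iterated extended Euclidean, after which constructing $\rho$ costs only $\mathcal O(m)$ additional time.
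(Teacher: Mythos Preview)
Your proposal is correct and follows essentially the same route as the paper: isolate $T = \sum_i (\rho(a,r_i)-\rho(a',r_i))\,v_a(r_i)$, rewrite both non-envy constraints as the two-sided bound $\gamma(a,a') \le T \le -\alpha\gamma(a',a)$, observe that $T$ is a multiple of the gcd, and for the converse use Bézout coefficients from the extended Euclidean algorithm and split them into positive and negative parts to define $\rho$. If anything, you are slightly more careful than the paper in explicitly noting that $\alpha \ge 0$ (so the inequality direction is preserved when multiplying) and in handling the degenerate case $\alpha = 0$.
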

\begin{proof}
	For brevity we let $b_i \coloneqq \rho(a, r_i)$ and $b'_i \coloneqq \rho(a', r_i)$ for each $i \in [m]$.
	Then $\rho$ is envy-resolving if and only if
	\begin{align*}
		&v_a(\sigma(a)) + \sum_{i=1}^m b_i v_a(r_i) \ge v_a(\sigma(a')) + \sum_{i=1}^m b'_i v_a(r_i)\\
		\text{and } & v_{a'}(\sigma(a')) + \sum_{i=1}^m b'_i v_{a'}(r_i) \ge v_{a'}(\sigma(a)) + \sum_{i=1}^m b_i v_{a'}(r_i).
	\end{align*}
	These inequalities can be reformulated as
	\begin{align*}
		&\sum_{i=1}^m (b_i - b'_i) v_a(r_i) \ge v_a(\sigma(a')) - v_a(\sigma(a)) = \gamma(a,a'), \text{ and}\\
		& \sum_{i=1}^m (b_i - b'_i) v_{a'}(r_i) \le v_{a'}(\sigma(a') - v_{a'}(\sigma(a)) = -\gamma(a',a).
	\end{align*}
	As $v_a(r) = \alpha v_{a'}(r)$ for each $r \in R$,
	$\rho$ is envy-resolving if and only if
	\begin{align*}
		\gamma(a,a') \le& \textstyle\sum_{i=1}^m (b_i - b'_i) v_a(r_i)
		       \\=&   \textstyle\sum_{i=1}^m (b_i - b'_i) \alpha v_{a'}(r_i) \le -\alpha \gamma(a',a).
	\end{align*}
Next, if $d \coloneqq \gcd(v_a(r_1), \dots, v_a(r_m))$, then there are $k_i \in \N$ such that $v_a(r_i) = d \cdot k_i$.
	Thus,
	\begin{align*}
		\textstyle\sum_{i=1}^m (b_i - b'_i) v_a(r_i) = d \sum_{i=1}^m (b_i - b'_i)k_i,
	\end{align*}
	and $\rho$ is envy-resolving if and only if there is an integer \[T = d \textstyle\sum_{i=1}^m (b_i - b'_i) k_i\] with $\gamma(a, a') \le T \le -\alpha \gamma(a', a)$.
	Clearly, $T$ is divisible by $d$.

We now show how such an extension can be computed if it exists.
By Bézout's Lemma \cite{bezout1779}, the equation $d \coloneqq \gcd(v_a(r_1), v_a(r_2), \dots, v_a(r_m))$ implies the existence of integers \( c_1, c_2, \dots, c_m \) with
\(
	d = \sum_{i=1}^m c_i v_a(r_i).
\) 

\newcommand{\eea}{\textsc{Extended Euclidean Algorithm}}
The computation of $d$ and the coefficients $c_i$ can be carried out iteratively using the \eea \ \cite{knuth1997art}. 
As $d$ divides $T$, there is a $q \in \N$ such that $T = q \cdot d$;
therefore
\[
	T = q \cdot \textstyle \sum_{i=1}^m c_i v_a(r_i) = \textstyle\sum_{i=1}^m (q \cdot c_i) \cdot v_a(r_i).
\]
We then decompose each $q \cdot c_i$ into positive and negative parts
\[
    b_i = \max\{0, q \cdot c_i\}, \quad b'_i = \max\{0, -q \cdot c_i\}.
\]
Then $b_i - b'_i = q \cdot c_i$, and therefore $\sum_{i=1}^m (b_i - b'_i) \cdot v_a(r_i) = T$.

Once coefficients $b_i$ and $b'_i$ are computed, we define $\rho$ to be an extension, such that $\rho(a,r_i) = b_i$ and $\rho(a',r_i) = b'_i$, for all $r_i \in R$. 
As $\gamma(a,a') \leq T \leq -\alpha\gamma(a',a)$, we get
\begin{align*}
&v_a(\sigma(a)) +\textstyle\sum_{i=1}^m b_i v_{a}(r_i) \\={} &v_a(\sigma(a)) + \textstyle\sum_{i=1}^m b'_i v_{a}(r_i) + T \\\geq {}&v_a(\sigma(a)) + \textstyle\sum_{i=1}^m b'_i v_{a}(r_i) + \gamma(a,a') \\={} &v_a(\sigma(a')) + \textstyle\sum_{i=1}^m b'_i v_{a}(r_i).
\end{align*}
This confirms that agent $a$ does not envy agent $a'$  under the extended allocation. Similarly, since $v_a(r) = \alpha v_{a'}(r)$,
\begin{align*}
      & v_{a'}(\sigma(a')) +      \textstyle\sum_{i=1}^m b'_i v_{a'}(r_i)\\
    ={} & v_{a'}(\sigma(a')) +    \textstyle\sum_{i=1}^m b_i v_{a'}(r_i) - \frac{T}{\alpha} \\
    \geq{} & v_{a'}(\sigma(a')) + \textstyle\sum_{i=1}^m b_i v_{a'}(r_i) + \gamma(a',a)\\
    ={} & v_{a'}(\sigma(a)) +     \textstyle\sum_{i=1}^m b_i v_{a'}(r_i).
\end{align*}
Thus, agent $a'$ does not envy agent $a$, ensuring that neither agent envies the other in the extended allocation.

The \eea{} runs in $\mathcal{O}(m\cdot\log W)$ time where $W = \max_{a \in A, r \in R} v_a(r)$.
Scaling the coefficients and decomposing them into $b_i$ and $b'_i$ is linear in $m$. Thus, the overall running time is in~$\mathcal{O}(m\log W)$.
\end{proof}

We next generalize the above to more than two agents and show the main result of this section.

\begin{theorem}\label{thm:poly}
    \ufair{} is polynomial-time solvable when $\supply(r) = \infty$ for all $r\in R$ and $k$ is unbounded.
\end{theorem}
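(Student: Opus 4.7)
The plan is to reduce \fair{} with unbounded supplies and budget to a system of integer difference constraints, i.e., an integer linear program (ILP) whose constraint matrix is totally unimodular and thus solvable in polynomial time. The decomposition follows the dichotomy provided by \cref{lem:diff} and \cref{lem:gcd}: first partition the agents into equivalence classes $C_1,\dots,C_t$ where two agents lie in the same class iff their valuations over $R$ are proportional. Within each class $C_j$, fix a reference valuation $v^*_j$ and write $v_a|_R = \alpha_a v^*_j$ for each $a \in C_j$ and some rational $\alpha_a > 0$.

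I would then introduce one non-negative integer variable $y_a$ per agent, meant to represent the normalized added value $\sum_{r \in R} \rho(a, r) \cdot v^*_{[a]}(r)$ (where $[a]$ denotes $a$'s class). For each same-class pair $(a,a')$ in $C_j$, \cref{lem:gcd} translates envy-freeness between $a$ and $a'$ into a two-sided bound $L_{a,a'} \le y_a - y_{a'} \le U_{a,a'}$ together with divisibility of $y_a - y_{a'}$ by $d_j = \gcd(v^*_j(r) : r \in R)$; after substituting $y_a = d_j t_a$, these become pure difference constraints on the $t_a$'s. For cross-class pairs, \cref{lem:diff} guarantees that envy can always be resolved, so I would add further integer variables parameterizing the per-pair Bezout-style correction of \cref{lem:diff} and translate cross-class envy-freeness into further difference-type constraints that couple these variables with the $y_a$'s.

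Because every resulting constraint relates exactly two variables with coefficients $\pm 1$, the overall constraint matrix is a signed incidence matrix of a directed graph, hence totally unimodular. Thus the LP relaxation has integer vertices and the ILP can be solved in polynomial time via linear programming (or equivalently by detecting positive cycles in a suitable difference graph). A feasible ILP solution yields target $y_a$-values and cross-class correction parameters, from which I reconstruct the actual extension $\rho$ by the within-class Bezout decomposition of \cref{lem:gcd} and the cross-class pairings of \cref{lem:diff}.

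The main obstacle is to show that the cross-class corrections from \cref{lem:diff}---which simultaneously add items to two agents in different classes, altering how every other agent perceives their bundles---can be encoded so that the ILP constraint matrix remains a pure difference system. This demands careful bookkeeping of how each such correction shifts the $y_a$ of both involved agents and couples them across classes, and the totally unimodular structure must survive this coupling. A secondary concern is achievability: not every non-negative multiple of $d_j$ is expressible as $\sum_r x_{a,r} v^*_j(r)$ with $x_{a,r} \in \N$. I plan to handle this by exploiting unbounded supply to pad every $y_a$ uniformly within its class above the Frobenius number of $\{v^*_j(r) : r \in R\}$, which leaves all difference constraints invariant.
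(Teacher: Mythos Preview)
Your within-class treatment matches the paper's: partition agents by proportionality over $R$, normalize so that all agents in a class share identical valuations with gcd $1$, and reduce within-class envy-freeness to an ILP of difference constraints $x_a - x_{a'} \geq \gamma'(a,a')$ whose matrix is totally unimodular. The divergence---and the gap---is in how you treat cross-class pairs.

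The paper does \emph{not} encode cross-class envy in the ILP at all. It first solves each within-class ILP independently and realizes the resulting extension. Then it eliminates the remaining cross-class envy edges \emph{sequentially and greedily}: for an edge $(a,a')$ it takes the two-item correction of \cref{lem:diff} (say $x$ copies of $r$ versus $x'$ copies of $r'$) and hands \emph{every} agent $a^*\in A$ whichever of these two bundles $a^*$ values weakly more. Since each agent receives its preferred option, no new envy appears anywhere; and since agents in the same proportionality class necessarily prefer the same option, all within-class differences are preserved. Iterating removes all cross-class edges in at most $|A|^2$ rounds. This decoupling---within-class via ILP, cross-class afterwards by a monotone-for-everyone step---is the key idea you are missing.

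Your plan instead introduces per-pair correction variables for \cref{lem:diff} and hopes the enlarged system stays totally unimodular. The obstacle you flag is real and, as far as I can see, fatal for this route: a correction for the pair $(a,a')$ adds concrete items to $a$ and to $a'$, and the value of those items to a third agent $a''$ is governed by $v_{a''}(r)$ and $v_{a''}(r')$, not by any $\pm 1$ combination of your $y$-variables. The envy constraints for pairs $(a'',a)$ and $(a'',a')$ therefore pick up arbitrary integer coefficients on the correction variables, and the difference-system (hence TU) structure is lost. The paper sidesteps this entirely by never coupling the classes inside the ILP.

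A minor remark on achievability: the paper avoids the Frobenius-number padding you propose. After normalizing to gcd $1$, it finds multisets $X,Y$ over $R$ with $v'(X) = v'(Y)+1$ via B\'ezout; for each agent $a$ with target $z_a$ it gives $z_a$ copies of $X$ to $a$ and $z_a$ copies of $Y$ to everyone else in the class, realizing exactly the prescribed pairwise differences without worrying about representability of individual targets.
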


\begin{proof}

Let $\sigma$ be an initial allocation.  We start with an empty extension $\rho$. Let $\mathcal{G}_{\sigma,\rho}$ denote the envy graph. Note that, as shown in \cref{lem:gcd}, finding an envy-resolving extension is not always possible.

Our algorithm proceeds in two phases. The first phase addresses envy among agents which have proportional valuations of all items in~$R$, while the second phase eliminates envy between agents with non-proportional valuations.

To implement the first phase, we utilize the fact that proportional valuations form an equivalence relation. This allows us to partition 
$A$ into equivalence classes $A_1 \uplus \dots \uplus A_t$, where $a,a' \in A_i$ if and only if $a$ and $a'$ have proportional valuations over~$R$. We begin by trying to resolve envy edges within each equivalence class.

Once this is completed, we move on to the second phase, where we eliminate envy between agents from different equivalence classes.

\paragraph{Phase 1: Eliminating envy between two agents with proportional valuations.}

Let~$A_i$ be an equivalence class and recall that~$R=\{r_1,r_2,\ldots,r_m\}$.
First, we compute a normalized valuation $v'$ as follows.
For each~$a \in A_i$, we compute~$d_a = \gcd(v_a(r_1),v_a(r_2),\ldots,v_a(r_m))$ and set~$v'_a(r_i) = \frac{v_a(r)}{d_a}$ for all~$r \in P \cup R$.
Note that~$v'_a(r)$ is an integer for all $r \in R$ but not necessarily for~$r \in P$.
Moreover, it holds that~$\gcd(v'_a(r_1),v'_a(r_1),\ldots,v'_a(r_m)) = 1$ and an extension~$\rho$ resolves envy with respect to valuations~$v'_a$ if and only if it resolves envy with respect to the original valuation~$v_a$.
Note that the envy gap~$\gamma(a,a')$ of two agents~$a,a' \in A_i$ with respect to~$v'$ might not be integral.
However, we can round all these values up to the nearest integer as valuations can only go up in integer steps as~$v'_a(r)$ is an integer for each~$r \in R$.
Slightly abusing notation, we will refer to~$\gamma'(a,a')$ as the rounded value.

We next show that~$v'_a(r)  = v'_{a'}(r)$ for all~$a,a' \in A_i$ and all~$r \in R$.
Since~$a,a' \in A_i$, it holds that~$v_a(r) = \alpha v_{a'}(r)$ for all~$r \in R$.
Moreover, since~$v_a(r)$ and~$v_{a'}(r)$ are non-negative integers, it holds that~$\alpha= \frac{p}{q}$ for two positive coprime integers~$p$ and~$q$.
We claim that $p=q=1$.
If~$p \neq 1$, then~$v'_a(r)$ is divisible by~$p$ for all~$r \in R$, contradicting the fact that~$\gcd(v'_a(r_1),v'_a(r_1),\ldots,v'_a(r_m)) = 1$.
Similarly, if $q \neq 1$, then~$v'_{a'}(r)$ are divisible by~$q$ for each~$r \in R$, another contradiction.
Hence~$\alpha = 1$ and~$v'_a(r) = v'_{a'}(r)$ for all~$r \in R$.

We next build an integer linear program (ILP) that encodes whether a solution exists.
We start with a variable~$x_a$ for each agent~$a \in A_i$ that roughly encodes how much utility agent~$a$ should get in a solution (with respect to the valuation~$v'_a$).
Note that since~$v'_a(r) = v'_{a'}(r)$ for all~$r \in R$, the meaning of variable~$x_a$ is universal for all agents in~$A_i$.
For each pair~$a,a' \in A_i$ of agents, we add the constraint~$x_a - x_{a'} \geq \gamma'(a,a')$.
Note that~$\gamma'(a,a')$ is a constant that can be precomputed in polynomial time.
This constraint encodes that if~$a$ gets an additional utility of~$x_a$ and~$a'$ gets an additional utility of~$x_{a'}$, then~$a$ does not envy~$a'$.
If there is an extension that resolves all envy within~$A_i$, then the constructed ILP has a solution.
It remains to show how to construct an extension from a solution to the ILP and to discuss why a solution to the constructed ILP can be found in polynomial time.

Towards the former, note that if a solution exists, then it remains a solution if all agents in~$A_i$ get~$c$ additional utility for any~$c \in \N$.
Moreover, we use the fact that $\gcd(v'_a(r_1),v'_a(r_2),\ldots,v'_a(r_m))=1$ for all~$a\in A_i$.
This implies that there are two multisets~$X,Y$ of items such that~$v'_a(X) = v'_a(Y) + 1$.
If a solution is found, then for each agent~$a \in A_i$, we do the following.
For each~$a \in A_i$, let~$z_a$ be the value assigned to variable~$x_a$ in the solution.
We give~$z_a$ times the bundle~$X$ to agent~$a$ and~$z_a$ times the bundle~$Y$ to all other agents in~$A_i$.
Note that the valuation of all bundles except for~$a$ increases by~$z_a \cdot v'_a(Y)$ and the valuation of the bundle of~$a$ increases by~$z_a \cdot v'_a(X) = z_a \cdot v'_a(Y) + z_a$.
That is, the valuation of~$a$'s bundle increases by precisely~$z_a$ compared to the bundles of all other agents in~$A_i$.
After doing the above for all agents in~$A_i$, it holds that the difference in valuation of the bundles of~$a$ and~$a'$ changed by~$z_a-z_{a'}$, which is precisely what the solution to the ILP required.
Since all of the above can be computed in polynomial time, we can compute an extension resolving envy given a solution to the constructed ILP.

To conclude the first phase, we note that the constructed ILP encodes a totally unimodular matrix as each constraint contains exactly two variables and one has coefficient 1 and the other has coefficient -1 \cite{Tam76}.
Since all constants in the ILP are integral, it holds that the corresponding polyhedron only has integer vertices and the ILP can be solved in polynomial time \cite{HK56} (or it can be concluded that no (integral) solution exists).

\paragraph{Phase 2: Eliminating envy between two agents with non-proportional valuations.}

Once the first phase has been applied to every equivalence class $A_i \subseteq A$, any remaining envy must necessarily exist between agents with non-proportional valuations.
Let $E_{\mathrm{diff}}$ denote the set of remaining envy edges between agents with non-proportional valuations. Next, we eliminate envy between two agents with non-proportional valuations without creating new envy similar to what we did in \cref{lem:diff}.

Let $a, a'$ be two agents with non-proportional valuations such that $a$ envies $a'$. By \cref{lem:diff}, we know that it is always possible to compute an extension $\rho'$ to eliminate this envy.
Let $x, x'$ be the number of items $r, r'$, respectively, added in $\rho'$ to resolve the envy between $a$ and $a'$. 

We update $\rho$ such that for each agent $a^* \in A$ we set
    $\rho(a^*, r) = \rho(a^*, r) + x$  if $x \cdot v_{a^*}(r) \geq x' \cdot v_{a^*}(r')$ and $\rho(a^*, r') = \rho(a^*, r') + x'$ otherwise.
For an agent $a^*$, assume w.l.o.g. that $x \cdot v_{a^*}(r) \geq x' \cdot v_{a^*}(r')$. If agent $a^*$ did not envy an agent $a$ before the allocation, no matter which item set $a$ gets, no new envy will be created. This follows from the fact that $v_{a^*}(\sigma(a^*)) \geq v_{a^*}(\sigma(a))$ implies \begin{align*}
    v_{a^*}(\sigma(a^*)) + x \cdot v_{a^*}(r) &\geq v_{a^*}(\sigma(a)) + x \cdot v_{a^*}(r)\\ &\geq v_{a^*}(\sigma(a)) + x' \cdot v_{a^*}(r').
\end{align*}

As no new envy edges are created, we can apply this step until $E_{\mathrm{diff}}=\emptyset$, implying that we have reached an envy-free extended allocation of the items.

\paragraph{Running time.}

As shown above, phase one can be solved in polynomial time for each equivalence class~$A_i$.
Since the number of equivalence classes is at most~$n$ and they can be computed in polynomial time, the first phase takes polynomial time overall.

At the beginning of phase $2$ there can be at most $n^2$ envy edges. For non-proportional envy edges, \cref{lem:diff} guarantees that no new envy edges are created during their elimination. Consequently, this step is executed at most $n^2$ times.
As resolving one conflict edges takes~$\mathcal{O}(m)$ time, the total running time for the second phase is in \(\mathcal{O}(n^2 \cdot m).\)
This concludes the proof of \cref{thm:poly}.
\end{proof}

\section{When some Supply is Bounded}
\label{sec:bounded-supply}

We next focus on the case where we still have no restrictions on the budget, but some items may have finite supply.
This makes finding an envy-resolving extension computationally hard,
even if we have only three different items with finite supply.
Also, as we will see, there is little hope for fixed-parameter tractability for the number of agents alone.
However, when parameterizing by the number of agents plus the number of (different) items, the problem becomes fixed-parameter tractable.

We first show the two hardness results.

\begin{proposition}
	\label{prop:np-h-r}
	\ufair{} is \NP-hard even if there are three additional items,
	if the budget is unbounded,
	and the valuations are binary.
\end{proposition}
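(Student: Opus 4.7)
The plan is to reduce from a classical NP-hard arithmetic problem such as \textsc{Partition}: given $x_1,\ldots,x_n$ with $\sum_i x_i = 2S$, decide whether some subset sums to exactly~$S$. The idea is to encode the two target sums of value~$S$ into the finite supplies of two of the three items in~$R$, and to build agent gadgets that force each element~$x_i$ to ``pick a side''. Since \cref{thm:poly} already rules out hardness when all supplies are infinite, the hardness must come from finite supply on some item.

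Concretely, I would name the three items $r_1,r_2,r_3$ and set $\supply(r_1)=\supply(r_2)=S$, with $r_3$ having finite or infinite supply depending on its auxiliary role. Two ``anchor'' agents $b_1,b_2$ would each approve exactly one of $r_1,r_2$, and their initial bundles would be chosen so that resolving the envy between them forces $b_1$ to receive all $S$ copies of $r_1$ and $b_2$ all $S$ copies of $r_2$. For every $x_i$, I would add a small element gadget whose internal envies can be resolved only if the gadget consumes exactly $x_i$ copies of $r_1$ \emph{or} exactly $x_i$ copies of $r_2$ from the shared pool but not a mixture; the third item $r_3$, together with a carefully designed initial allocation, serves as the auxiliary used to seal the gadget's internal envies. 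All the arithmetic has to live in the supplies and in the initial bundles, since binary valuations over three items allow only $2^3$ different approval patterns. Correctness of the reduction then follows in two directions: (i) a valid partition $I\subseteq[n]$ yields the extension that gives $x_i$ copies of $r_1$ in gadget~$i$ for $i\in I$ and $x_i$ copies of $r_2$ otherwise; (ii) conversely, the anchors' envy forces total usage~$S$ of each of $r_1$ and $r_2$, and the gadgets force every $x_i$ to land on exactly one side, recovering a valid partition. For NP membership it suffices to note that any minimum-size envy-resolving extension has size bounded by $\sum_{r\in R}\supply(r)$, which is polynomial in the input, so guessing and verifying an extension runs in polynomial time.

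The most delicate step will be the design of the element gadget. With only three items and $\{0,1\}$-valued utilities, there is very little room to construct a gadget that behaves as a rigid ``all-or-nothing'' switch without creating unwanted envy edges between different gadgets or into the anchors. I would expect the gadget to require several auxiliary agents per $x_i$ whose pairwise envy constraints admit essentially only two consistent extensions (``left'' or ``right''), while $r_3$ most plausibly appears with large or infinite supply to absorb balancing copies that neutralize the side-effects across gadgets. A good sanity check along the way is that, once $r_3$'s supply is made infinite, the construction should \emph{fail} to be hard on its own, matching \cref{thm:poly}; the role of the finite-supply items $r_1,r_2$ must therefore be essential to making the decision problem nontrivial.
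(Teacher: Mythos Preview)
Your proposal has a genuine gap: the element gadget is never constructed, only postulated. You correctly identify the ``all-or-nothing'' switch as the delicate step, but with binary valuations over only three item types there is no obvious way to build it. Any agent who approves both $r_1$ and $r_2$ is indifferent between receiving $a$ copies of $r_1$ and $x_i-a$ copies of $r_2$ for every $0\le a\le x_i$, so such an agent cannot enforce a commitment to one side. Agents approving only one of $r_1,r_2$ can enforce consumption from that side, but then the choice is fixed rather than free. Compensating watchers via $r_3$ (especially with infinite supply) dissolves the coupling you need. In short, the disjunctive behaviour you want seems to require either nonlinearity in the valuations or combinatorial structure external to the arithmetic, neither of which your sketch supplies. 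Without this gadget the reduction does not go through.

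The paper sidesteps the issue entirely by reducing from \textsc{Clique} rather than \textsc{Partition}, so no arithmetic is encoded at all. There is one agent per vertex, one per edge, and a special agent~$b$. Each edge agent initially envies~$b$ by one and must receive exactly one copy of $r$ or $r'$ (the supplies sum to $|E|$). Receiving $r$ is visible to the two incident vertex agents, who then need a copy of $r^*$ to avoid envying that edge agent. Since $\supply(r)=\binom{\ell}{2}$ and $\supply(r^*)=\ell$, the $\binom{\ell}{2}$ edges receiving $r$ must be incident to at most $\ell$ vertices, which forces a clique. The combinatorial structure of the graph replaces the role your all-or-nothing gadget was meant to play.
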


\begin{proof}
	We reduce from the \textsc{Clique} problem, where the task is to decide
	whether a given graph $G = (V, E)$ contains a complete subgraph with a given number $\ell$ of vertices.
	We have an agent $a_v$ for each vertex $v \in V$, an agent $a_e$ for each edge $e \in E$, and an extra agent $b$.
	Initially, $b$ holds an item that is approved by $b$ and each edge agent $a_e$,
	each vertex agent $a_v$ holds an item that only $a_v$ approves,
	and for each $e = \{u,v\} \in E$, the edge agent $a_e$ holds an item that only $a_u$ and $a_v$ approve.
	Thus, each edge agent $a_e$ envies $b$ by one,
	and each vertex agent $a_v$ values its own bundle as valuable as that of each incident edge agent, and more valuable by one than that of each non-incident edge agent.
	We have three additional items $r, r', r^*$ with $\supply(r) = \binom{\ell}{2}$, $\supply(r') = |E| - \binom{\ell}{2}$, and $\supply(r^*) = \ell$.
	All vertex agents $a_v$ approve $r$ and $r^*$, and all edge agents $a_e$ approve $r$ and $r'$.

	Clearly, the reduction can be computed in polynomial time.
	So let us prove its correctness.
	Suppose first that $K \subseteq V$ forms a clique of size $\ell$ in $G$,
	and let $E_k \subseteq E$ be the set of $\binom{\ell}{2}$ edges within that clique.
	Then we create an extension
	by allocating a copy of $r^*$ to each $a_v$ with $v \in K$,
	a copy of $r$ to each $a_e$ with $e \in E_K$,
	and a copy of $r'$ to $a_e$ with $e \notin E_K$.
	Then the edge agents no longer envy $b$.
    Next, consider a vertex agent $a_v$ and an edge agent $a_e$.
	If $e \notin E_K$, then the value of $a_e$'s bundle did not change for $a_v$, so there still is no envy.
	So suppose that $e \in E_K$.
	If $e$ is not incident to $v$, then $a_v$ now values its bundle and that of $a_e$ equally.
	If $e$ is incident to $v$, then $v \in K$, and $a_v$'s valuation for its own bundle and $a_e$'s bundle increased by one.
	Thus, in all, this extension resolves envy.

	Suppose next that $\rho$ resolves envy.
	As each edge agent $a_e$ envies $b$ by one, we need to assign at least one item to each $a_e$ that $a_e$ approves.
	This only leaves the items $r$ and $r'$, and as $\supply(r) + \supply(r') = |E|$, each edge must receive exactly one item.
	Let $E_K$ be the set of edges such that for each $e \in E_k$, agent $a_e$ receives $r$.
	Then each vertex agent $a_v$ where $v$ is incident to an edge $e \in E_K$ must receive a resource $r^*$, otherwise it envies $a_e$.
	As $\supply(r^*) = \ell$ and $|E_K| = \supply(r) = \binom{\ell}{2}$, the edges in $E_K$ must be incident to at most $\ell$ vertices.
	This is only possible if these vertices form a clique of size $\ell$, and $E_K$ are the edges within that clique.
\end{proof}

Moreover, there is little hope for fixed-parameter tractability with the number of agents even if they all have identical valuations.

\begin{proposition}
	\label{prop:ufair-wone-a}
	\ufair{} is \Wone-hard parameterized by~$|A|$,
	even if all numbers in the input are encoded in unary,
	and all agents have identical valuations, even within the initial resource set.
\end{proposition}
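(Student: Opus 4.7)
The plan is to reduce from \textsc{Unary Bin Packing}, which is known to be \Wone-hard parameterized by the number of bins even when all item sizes are encoded in unary. An instance consists of $B$ bins of capacity $C$ and items with sizes $s_1, \ldots, s_n$ satisfying $\sum_{j=1}^n s_j = BC$, and the task is to partition the items into $B$ groups each summing to exactly $C$.

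Given such an instance, I would construct the \ufair{} instance as follows. Introduce $B+1$ agents $a_0, a_1, \ldots, a_B$, all equipped with one common valuation function. The initial allocation gives $a_0$ a single initial item of value $C$ (this is the only item of $P$) and gives each of $a_1, \ldots, a_B$ an empty initial bundle. For every input item $s_j$, add an additional item $r_j$ with $v(r_j) = s_j$ and $\supply(r_j) = 1$. Finally, choose a non-restrictive budget (for example $k = n$). By construction, $|A| = B+1$, all numerical values are inherited in unary, and the identical-valuations property holds everywhere, including on $P$.

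The correctness argument uses the fact that when valuations are identical, envy-freeness forces every agent to end up with the same total value $V$. Given a packing of the source instance into bundles $B_1, \ldots, B_B$ of value $C$, allocating to each $a_i$ the items corresponding to $B_i$ (and nothing to $a_0$) makes every agent's total value equal to $C$, yielding an envy-free extension.

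For the converse, suppose $\rho$ is envy-resolving and let $x$ be the total value of items handed to $a_0$ by $\rho$, so that $V = C + x$. Then each of the remaining $B$ agents must receive extension items worth exactly $V = C + x$, so the total value handed out equals $x + B(C+x) = BC + (B+1)x$. Since every $r_j$ has supply $1$, the total value available in the pool is $\sum_j s_j = BC$, forcing $x = 0$ and forcing every item to be allocated. The bundles assigned to $a_1, \ldots, a_B$ then partition $\{s_1, \ldots, s_n\}$ into $B$ groups of value exactly $C$, a valid bin packing. The main subtlety is this tightness step, which is the one place the equality $\sum_j s_j = BC$ and the unit supplies are used; the rest is bookkeeping, and the parameter $|A| = B+1$ is clearly a function of the source parameter.
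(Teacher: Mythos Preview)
Your proof is correct and essentially identical to the paper's: both reduce from \textsc{Unary Bin Packing} under the tight assumption $\sum_j s_j = BC$, create $B+1$ agents with one common valuation, give the distinguished agent a single initial item of value $C$, and turn each bin-packing item into an additional item of supply~$1$. Your converse direction is spelled out in more detail (explicitly deriving $x=0$ from the supply bound and the equality $\sum_j s_j = BC$), but the construction and the underlying argument are the same.
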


\begin{proof}
	We give a reduction from \textsc{Bin Packing} parameterized by the number~$\ell$ of bins:
	Given integers $u_1, u_2, \dots, u_n \in \N$, a number $\ell$ of bins, and a bin size $B$,
	the task is to decide whether there is an assignment of the integers to the bins such that the sum of the integers in one bin is at most $B$.
	This problem is \Wone-hard even if all numbers are encoded in unary and if the sum of all integers equals $\ell B$ \cite{JKMS13}; thus any solution must assign integers of value exactly $B$ to each bin.
	Given a \textsc{Bin Packing} instance, we construct an instance of \ufair{} as follows.
	We have $\ell+1$ agents $a_1, \dots, a_\ell, b$ with identical valuations; we call the valuation function $v$.
	Initially, only $b$ holds an item $p$ with value $v(p) = B$.
	For each $j \in [n]$, $\finR$ contains an item $r_j$ with value $v(r_j) = u_j$.

	Clearly, the construction can be computed in polynomial time.
	We next show the two instances to be equivalent.
	The \textsc{Bin Packing} instance is a \yes-instance
	if and only if there is an assignment $\pi \colon [n] \to [\ell]$ such that each bin receives integers of value exactly to $B$.
	Observe that the extension that allocates $r_j$ to $a_{\pi(j)}$ for each $j \in [n]$ assigns a set of items with value exactly $B$ to each agent $a_i$ and thus resolves all envy with $b$.
	Indeed, an extension resolves envy if and only if it assigns a set of value $B$ to each $a_i$; this proves the equivalence of the two instances.
\end{proof}

We conclude this section with a result that holds for the bounded as well as for the unbounded variant of our problem.
From \cref{prop:ufair-wone-a} we can conclude that parameter number of agents is too small to hope for fixed-parameter tractability.
\cref{prop:np-h-r} demonstrates the same for the number of (different) additional items.
However, when these parameters are combined, the problem becomes fixed-parameter tractable.
This follows from a simple ILP formulation combined with the algorithm by \citet{eisenbrand2020ilp}.

\begin{observation}
	\label{prop:ILP}
	\fair{} 
	is FPT when parameterized by $|A| + |R|$.
\end{observation}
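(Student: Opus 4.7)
The plan is to formulate \fair{} as an integer linear program (ILP) whose number of variables depends only on $|A|+|R|$, and then invoke a known fixed-parameter algorithm for ILP parameterized by the number of variables (e.g., the result of \citet{eisenbrand2020ilp} or, alternatively, Lenstra-type algorithms).

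First, I would introduce a non-negative integer variable $x_{a,r}$ for each pair $(a,r) \in A \times R$, representing the number of copies of item $r$ allocated to agent $a$ in the extension. This gives at most $|A| \cdot |R|$ variables, which is bounded by a function of $|A|+|R|$. The constraints encode three things. Supply: for every $r \in R$ with $\supply(r) < \infty$, add $\sum_{a \in A} x_{a,r} \le \supply(r)$ (for items with infinite supply, omit the constraint). Envy-freeness: for every ordered pair $(a,a') \in A \times A$ of distinct agents, add the linear inequality
\[
v_a(\sigma(a)) + \sum_{r \in R} x_{a,r}\, v_a(r) \ \ge\ v_a(\sigma(a')) + \sum_{r \in R} x_{a',r}\, v_a(r).
\]
Budget: add $\sum_{a \in A}\sum_{r \in R} x_{a,r} \le k$. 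All coefficients and right-hand sides are either given in the input or computable from it (the terms $v_a(\sigma(a))$ and $v_a(\sigma(a'))$ are constants), and their encoding length is polynomial in the input size.

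A feasible integer solution to this ILP corresponds exactly to a size-at-most-$k$ envy-resolving extension $\rho$ by setting $\rho(a,r) \coloneqq x_{a,r}$, and conversely every such extension yields a feasible ILP solution. Since the number of variables is at most $|A| \cdot |R| \le (|A|+|R|)^2$, applying the FPT algorithm for ILP in a number of variables bounded by a function of the parameter yields an algorithm with running time $f(|A|+|R|) \cdot |I|^{\mathcal{O}(1)}$ for some computable function $f$, which is the desired bound.

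The only potential subtlety — and what I would verify carefully — is that the magnitudes of the coefficients and right-hand sides (in particular the constant terms $v_a(\sigma(a)) - v_a(\sigma(a'))$ and the supply values, which may be as large as the input numbers) enter the ILP algorithm's running time only logarithmically, so that they do not inflate the bound beyond FPT; this is the case for the algorithm of \citet{eisenbrand2020ilp} since we assume basic arithmetic on input numbers takes constant time.
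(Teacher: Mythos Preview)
Your proposal is correct and takes essentially the same approach as the paper: formulate the problem as an ILP with one variable per agent--item pair, encode supply, envy-freeness, and budget as linear constraints, and invoke a known FPT algorithm for ILP. The only cosmetic difference is that you parameterize by the number of variables (Lenstra-style), whereas the paper invokes \citet{eisenbrand2020ilp} via the number of constraints; both quantities are bounded in $|A|+|R|$, and your treatment of the coefficient-magnitude issue is in fact more careful than the paper's.
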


\begin{proof}
	The ILP uses an integer variable $0 \le x_a^r \le \supply(r)$ for each agent $a \in A$ and each item $r \in R$.
	Our formulation contains the following constraints:
	\begin{flalign}
		&& \label{eq:ilp-2}\sum_{a \in A} x_a^r \leq \supply(r) && \text{for all } r \in R
	\end{flalign}%
	\begin{equation}
		\label{eq:ilp-3}
		\begin{multlined}
			v_a(\sigma(a))\! + \!\sum_{r \in R}\! v_a(r) x_a^r \ge{} v_a(\sigma(a'))\! + \!\sum_{r \in R} v_{a'}(r) x_{a'}^r\\
			\text{for all } a, a' \in A.
		\end{multlined}
	\end{equation}
	The first constraint ensures that we add at most $\supply(r)$ copies of $r \in R$,
	while the second constraint ensures that the extended allocation is envy-free.
	If we are given an instance of the bounded variant, we add the constraint
	\begin{equation}
		\label{eq:ilp-1}\sum_{a \in A} \sum_{r \in R} x_a^r\leq k
	\end{equation}
	to ensure that the extension allocates at most $k$ items.

	Fixed-parameter tractability now follows from the fact that our ILP contains $\mathcal O(|A| + |R|)$ variables and $\mathcal O(|A|^2 + |R|))$ constraints
	and from the result by \citet{eisenbrand2020ilp} showing that solving ILPs is fixed-parameter tractable with respect to their number of constraints. 
\end{proof}

\section{When the Budget is Bounded}
\label{sec:bounded-budget}

Finally, we consider the case where we have a bound $k$ on the size of the extension.
We show that, even if the supply is not the bottleneck,
there is little hope for the problem to be fixed-parameter tractable when parameterized by $k$.

\begin{theorem}
	\label{thm:hardness}
	\fair{} with envy-freeness is \Wone-hard with respect to~$k$ even if restricted to binary valuations, if all supplies are $k$, and, initially, there is only one envious agent. 
\end{theorem}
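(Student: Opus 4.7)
The plan is to give a parameterized reduction from \textsc{Clique}, which is \Wone-hard parameterized by the clique size $\ell$, to our problem with $k=\ell$. The high-level idea is to have a single initially envious agent~$b$ whose envy of magnitude~$\ell$ can only be repaired by handing $b$ exactly $k$ items that encode the selection of an $\ell$-clique.

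Given an instance $(G,\ell)$ of \textsc{Clique}, I construct the following agents: a central agent $b$; a dummy agent $a^*$; a \emph{vertex gadget} agent $\delta_v$ for every $v\in V(G)$; and a \emph{non-edge gadget} agent $\alpha_{uv}$ for every non-edge $\{u,v\}$ of $G$. The additional items are $R=\{r_v : v\in V(G)\}$, each with supply $k$. I give $a^*$ an initial bundle of $\ell$ private items valued $1$ by $b$ and $0$ by everyone else; I give each $\delta_v$ a private item valued $1$ only by $\delta_v$, and likewise a private item for each $\alpha_{uv}$. The valuations on $R$ are: $v_b(r_v)=1$ for every $v$; $v_{\delta_v}(r_v)=1$ and $0$ on other items of $R$; $v_{\alpha_{uv}}(r_u)=v_{\alpha_{uv}}(r_v)=1$ and $0$ elsewhere; and $v_{a^*}\equiv 0$. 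A direct check shows that $b$ is the unique initially envious agent, envying only $a^*$ and by exactly $\ell$.

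For the forward direction, given an $\ell$-clique $K=\{v_1,\dots,v_\ell\}$, the extension that gives $b$ one copy of each $r_{v_i}$ has size $k=\ell$, closes $b$'s envy, and creates no new envy: each $\delta_v$ sees at most one valued copy of $r_v$ at $b$, and for every non-edge $\{u,v\}$ at most one endpoint lies in $K$, so $\alpha_{uv}$ sees value at most $1$ at $b$. For the backward direction, $b$'s envy gap of $\ell$ combined with the budget $k=\ell$ forces the entire extension to go to $b$ (any item given elsewhere would leave $b$ still envying $a^*$). The $\delta_v$ gadgets then force $\rho(b,r_v)\in\{0,1\}$ for every $v$, so $b$ receives one copy each of $\ell$ distinct vertex items, and the $\alpha_{uv}$ gadgets force the resulting vertex set to contain no non-edge and hence to be an $\ell$-clique.

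The main thing to be careful about is the envy bookkeeping: I must verify that no gadget agent is initially envious and that no gadget agent envies $b$ under the intended extension. Both follow from keeping all gadget items worthless to everyone except their owner, and keeping $v_b$ at zero on all gadget items. The tight coupling of~$b$'s envy gap with the budget is what prevents extensions from ``wasting'' items on agents other than~$b$ in the backward direction. Since the reduction is polynomial, uses binary valuations, has all supplies equal to~$k$, and has exactly one initially envious agent, and since the parameter is preserved as $k=\ell$, \Wone-hardness with respect to~$k$ follows.
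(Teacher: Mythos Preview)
Your reduction is correct and the underlying idea coincides with the paper's: encode a vertex selection by the items handed to a single envious agent~$b$, and use constraint-checker agents whose own bundle has value~$1$ to rule out forbidden pairs. The paper, however, reduces from \textsc{Independent Set} rather than \textsc{Clique}: it introduces one agent~$a_e$ per \emph{edge} $e=\{u,v\}$ who approves $r_u$ and $r_v$, and makes $b$ envy each $a_e$ by exactly~$\ell$ via the initial bundles themselves. This is more economical than your version in two ways: there is no separate dummy agent~$a^*$ (the edge agents double as the source of $b$'s initial envy), and there is no need for your vertex gadgets~$\delta_v$, because any edge incident to~$v$ already prevents $\rho(b,r_v)\ge 2$. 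Your construction compensates by adding $\delta_v$ explicitly, which is necessary in the \textsc{Clique} formulation since a universal vertex would otherwise be unconstrained; both routes yield the same \Wone-hardness with the stated restrictions.
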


\begin{proof}
	We give a reduction from \textsc{Independent Set} parameterized by solution size~$\ell$,
	which, given a graph $G = (V, E)$ and an integer $\ell$, asks whether there is a set $K \subseteq V$ of at least $\ell$ pairwise non-adjacent vertices.
	Given such an instance, we construct an equivalent instance of \fair{} as follows.
	We have one agent~$a_e$ for each edge~$e \in E$ and one additional ``selection'' agent~$b$.
	The set~$P$ contains~$m\ell$ initially assigned items (where~$m = |E|$) of two different types~$t_1$ and~$t_2$.
	Each agent~$a_e$ initially holds~$\ell-1$ items of type~$t_1$ and one item of type~$t_2$.
	Agent~$b$ approves items of both types and each agent~$a_e$ only approves items of type~$t_2$.
	Note that initially only agent~$b$ is envious (as $b$ does not hold any items).
	The set~$R$ contains one item~$r_v$ for each vertex~$v \in V$.
	The item~$r_v$ is approved by agent~$b$ and all agents~$a_e$ where~$e$ is incident to~$v$ in~$G$.
	We complete the construction by setting~$k = \ell$.
	
	Clearly, the reduction can be computed in polynomial time.
	We next show the two instances to be equivalent.
	Assume first that there is an independent set~$K$ of size~$\ell$ in~$G$.
	Then we create an extension that allocates one copy of each item in $\{r_v \mid v \in K\}$ to $b$.
	As $b$ has positive value for all these items, $b$ does not envy any other agent.
	Each agent~$a_e$ evaluates the bundle of each agent~$a_{e'}$ (including themselves) with~$1$.
	However, since $K$ is an independent set, agent~$b$ holds at most one item that~$a_e$ approves.
	Thus, no agent envies any other agent and the constructed instance is a \yes-instance of \fair.
	
	For the converse, let $\rho$ be an envy-resolving extension of size $\ell$.
	All of these $\ell$ items must be assigned to $b$ as otherwise $b$ envies all other agents.
	If $\rho(b, r_u) + \rho(b, r_v) > 1$ for any edge $e = \{u,v\}$,
	then $a_e$ will envy $b$, as it has positive value for both $r_u$ and $r_v$, but only a value of $1$ for its own bundle.
	Thus, the set of items allocated by $\rho$ form an independent set of size $\ell$ in $G$.
\end{proof}

As the hardness results from \cref{sec:bounded-supply} also hold when we have a size bound $k$,
we cannot hope for fixed-parameter tractability for the number of (different) items.
We are however able to show fixed-parameter tractability with respect to the sum over all item supplies.


We remark that this algorithm implies that \fair{} is in XP with respect to $k$.

\begin{theorem}
	\label{thm:fpt-r-bounded}
	\fair{} can be solved in $\mathcal O(|R|^{\min\{p, k\}} \cdot |A|^2)$ time, where $p = \sum_{r \in R} \supply(r)$ is the sum of the finite supplies.
\end{theorem}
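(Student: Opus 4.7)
My plan is a bounded-depth search-tree algorithm that maintains a partial extension $\rho$, initialized to $\rho \equiv 0$. At each node I would first check envy-freeness of $(\sigma,\rho)$ by inspecting all ordered pairs of agents in $\mathcal{O}(|A|^2)$ time, and return \yes{} if $\mathcal{G}_{\sigma,\rho}$ has no edges. Otherwise I would pick an arbitrary envy edge $(a,a')$ and branch over the items $r \in R$ whose remaining supply $\supply(r) - \sum_b \rho(b,r)$ is positive: in the $r$-th branch I increment $\rho(a,r)$ by one and recurse. If $\sum_{a,r}\rho(a,r)$ already equals $\min\{p,k\}$ in an envious node, that branch returns \no.

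For correctness I would maintain the invariant that, whenever a size-$\leq k$ envy-resolving extension $\rho^*$ with $\rho(a,r) \leq \rho^*(a,r)$ for all $(a,r)$ exists, at least one child inherits the invariant. Fixing the envy edge $(a,a')$ picked by the algorithm and setting $\Delta_b := v_a(\rho^*,b) - v_a(\rho,b) \geq 0$ for $b \in \{a,a'\}$, the envy condition at $(\sigma,\rho)$ together with the envy-freeness of $(\sigma,\rho^*)$ imply $\Delta_a - \Delta_{a'} \geq 1$, hence $\Delta_a \geq 1$. Consequently there is an item $r^* \in R$ with $v_a(r^*) > 0$ and $\rho^*(a,r^*) > \rho(a,r^*)$, and the branch that increments $\rho(a,r^*)$ preserves the invariant while strictly decreasing the remaining budget.

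For the running time I note that each recursive call strictly increases $\sum_{a,r}\rho(a,r)$, a quantity bounded above both by the budget $k$ and by the total finite supply $p$. The search tree therefore has depth at most $\min\{p,k\}$ and branching factor at most $|R|$, giving at most $\mathcal{O}(|R|^{\min\{p,k\}})$ nodes. Combined with the $\mathcal{O}(|A|^2)$ per-node cost of envy detection, this matches the claimed $\mathcal{O}(|R|^{\min\{p,k\}} \cdot |A|^2)$ bound.

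The main obstacle I anticipate is the exchange argument behind the correctness invariant: a priori one might also need to branch on giving items to non-envious agents, or on items worthless to~$a$, in which case the branching factor would blow up by a factor of $|A|$. The inequality $\Delta_a - \Delta_{a'} \geq 1$ is what rules this out, because it forces any putative solution $\rho^*$ to assign \emph{a} strictly larger count of some item positively valued by $a$ to $a$ than the current $\rho$ does, so restricting the branching to this case loses no solutions.
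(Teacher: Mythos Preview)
Your proposal is correct and matches the paper's proof essentially line for line: the same search-tree algorithm that picks an envious agent and branches over items with remaining supply, the same invariant $\rho \le \rho^*$ for correctness, and the same depth-times-branching-factor running-time analysis. Your $\Delta_a - \Delta_{a'} \ge 1$ computation is in fact a slightly more careful justification than the paper gives for why the envious agent must receive an additional item in $\rho^*$ (and your observation that one may even restrict to items with $v_a(r^*)>0$ is a harmless strengthening the paper does not state).
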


\begin{proof}
	Our algorithm is based on a branching routine
	which is given an instance $(P, R, \supply, A, v, \sigma,k)$
	and an (initially empty) extension $\rho$ that allocates items with finite supply, and proceeds as follows.

    If $\rho$ resolves all envy, return \yes.
    Otherwise, there exists at least one envious agent $a \in A$.
	For each $r \in R$ with $\supply(r) > 0$ (if there is no such item or $k=0$, return \no),
	make a recursive call with input $(P, R, \supply', A, v, \sigma,k-1)$ and $\rho'$,
	where $\supply'$ is a copy of $\supply$ but $\supply'(r) = \supply(r) - 1$
	and $\rho'$ is a copy of $\rho$ but $\rho'(a, r) = \rho(a, r) + 1$.
	Return \yes{} if one of the calls returns \yes{}, and \no{} otherwise.

	Clearly, every extension $\rho$ in any recursive call is valid, i.e., it allocates each item $r \in R$ at most $\supply(r)$ times and breaks as soon as $k$ items have been allocated.
	Thus, if the algorithm computes an extension which is envy-resolving, then we have a \yes-instance at hand.

	For the converse, let $\rho^*$ be an envy-resolving extension of size at most $k$.
	For two extensions $\rho$ and $\rho'$ we write $\rho \le \rho'$ if $\rho(a, r) \le \rho'(a, r)$ for all $a \in A$ and $r \in R$.
	We prove that the algorithm returns \yes{} by induction over the size $k'$ of $\rho$.
	For each $k' \ge 0$, we claim that there is a recursive call that is given an extension $\rho$ of size $k' < k$ such that $\rho \le \rho^*$.
	This statement is clearly true for $k' = 0$.
	So suppose that the statement is true for some fixed $k'$
	and let $\rho \le \rho^*$ be the corresponding extension.
	If the algorithm returns \yes{} within the current recursive call, then we are done.
	Note that specifically, the algorithm returns \yes{} if $\rho(a, r) = \rho^*(a, r)$ for all $a \in A$ and $r \in R$, since $\rho$ is envy-resolving.
	Suppose that the algorithm does not return \yes{} within the current recursive call.
    Note that for each agent $a$, that is envious with respect to $(\sigma,\rho)$,
    and thus particularly the agent that is chosen by the algorithm within our call,
    the solution $\rho^*$ must assign at least one (additional) item from $R$ to that agent;
    thus, $\rho(a, r) < \rho^*(a, r)$ for some item $r \in R$.
	This also implies that $\supply(r) > 0$ in the current recursive call.
	Then, the algorithm makes a call that is given an extension $\rho'$ that is a copy of $\rho$, but $\rho'(a^*, r) = \rho(a^*, r) + 1$.
	Note that $\rho'$ allocates $k'+1\le k$ finite supply items and $\rho' \le \rho^*$, proving the induction, and thus the correctness of the algorithm.

    As for the running time, checking whether the current extension resolves envy takes $\mathcal O(|A|)$ time.
    In each recursive call, one item is assigned and $k$ is decreased by one.
    Thus, the depth of the branching tree is at most $\min\{p, k\}$.
    Moreover, as in each call, the algorithm makes at most $|R|$ recursive calls, the overall running time is as claimed.
\end{proof}

\section{Conclusion}
\label{sec:conclusion}
We initiated the study of mitigating fairness of an initial allocation by adding goods,
restricting ourselves to the notion of envy-freeness and additive valuations.
In our parameterized complexity analysis, we leave open whether \fair{} is NP-hard or polynomial-time solvable for a constant number of agents.
Further, can \cref{thm:fpt-r-bounded} be lifted to the setting where the budget is unlimited and there are both finite and infinite supplies?
(The parameter would then become the sum of the finite supplies.)
When adapting the branching routine to only assign finite supply items and calling the polynomial-time algorithm from \cref{thm:poly} to decide whether the current allocation can be extended with the infinite supply items, one would need to derive a hint which agent is guaranteed to require an item with finite supply.
Finally, does \cref{prop:np-h-r} also hold for two additional items?

Our problem can also be studied for other fairness notions such as EF1 \cite{DBLP:conf/bqgt/Budish10} or EFX \cite{DBLP:journals/teco/CaragiannisKMPS19}, but also the many fair share notions \cite{steinhaus1948fair,budish2011mms,babichenko2024quantile}.
As for EF1 and EFX, we believe that our hardness results can be easily transferred.

Finally, investigating approximation guarantees, e.g.\ when minimizing
the number of added items could be an intriguing research direction.
Similarly, it seems attractive to minimize (absolute or relative) envy approximately
in our setting.
Since the latter approximation is known to be mostly intractable when allocations
are computed from scratch~\cite{Lipton2024approxenvy}, it might be worth
to combine this with parameterized approaches or structural restrictions.
\clearpage 

\subsection*{Acknowledgements}
This research project started at the annual research retreat of the
Algorithmics and Computational Complexity (AKT) group of TU
Berlin, held in Darlingerode, Germany, September 17--22, 2023.
Matthias Bentert acknowledges his support by the European Research Council (ERC) under the European Union’s Horizon 2020 research and innovation programme (grant agreement No.\ 819416).
Eva Deltl acknowledges her support by the Deutsche Forschungsgesellschaft (German Research Foundation, DFG), project COMSOC-MPMS, (grant agreement No.\ 465371386).
Pallavi Jain acknowledges her support from SERB-SUPRA grant number SPR/2021/000860 and IITJ Seed Grant grant I/SEED/PJ/20210119.
Robert Bredereck, Eva Deltl and Pallavi Jain were supported by a DST-DAAD travel grant (DAAD grant agreement No.\ 57683502).

\bibliographystyle{named} 
\bibliography{references_clean}

\end{document}